\numberwithin{figure}{section}
\title{Approximation algorithms for two-machine flow-shop scheduling with a conflict graph%
\footnote{This work was partially supported by NSERC Canada and NSF China.}}
\titlerunning{Approximating $F2 \mid G = (V, E), p_{ij} \mid C_{\max}$ (v: \today)} %optional, in case that the title is too long; the running title should fit into the top page column
\author[1]{Yinhui Cai\footnote{Co-first authors.}}
\author[2]{Guangting Chen}
\author[3]{Yong Chen\footnote{Correspondence authors.}}
\author[4]{Randy Goebel}
\author[4]{Guohui Lin$^{\ddagger}$}
\author[4,5]{Longcheng Liu$^{\dagger}$}
\author[3]{An Zhang}
\affil[1]{School of Sciences, Hangzhou Dianzi University.  Hangzhou, Zhejiang 310018, China.}
\affil[2]{Taizhou University.  Taizhou, Zhejiang 317000, China.
  \texttt{gtchen@hdu.edu.cn}}
\affil[3]{Department of Mathematics, Hangzhou Dianzi University.  Hangzhou, Zhejiang 310018, China.
  \texttt{\{chenyong,anzhang\}@hdu.edu.cn}}
\affil[4]{Department of Computing Science, University of Alberta.  Edmonton, Alberta T6G 2E8, Canada.
  \texttt{\{rgoebel,guohui\}@ualberta.ca}}
\affil[5]{School of Mathematical Sciences, Xiamen University.  Xiamen, Fujian 361005, China.
  \texttt{longchengliu@xmu.edu.cn}}
\authorrunning{Y. Cai {\it et al.}} %mandatory. First: Use abbreviated first/middle names. Second (only in severe cases): Use first author plus 'et al..'
\subjclass{F.2.2 Sequencing and Scheduling; G.2.1 Combinatorial algorithms; G.4 Algorithm design and analysis}
\keywords{Flow-shop scheduling, conflict graph, $b$-matching, path cover, approximation algorithm}% mandatory: Please provide 1-5 keywords
\begin{document}

\maketitle

\begin{abstract}
%==================================================================================================
Path cover is a well-known intractable problem that finds a minimum number of vertex disjoint paths in a given graph to cover all the vertices.
We show that a variant, where the objective function is not the number of paths but the number of length-$0$ paths (that is, isolated vertices),
turns out to be polynomial-time solvable.
We further show that another variant, where the objective function is the total number of length-$0$ and length-$1$ paths, is also polynomial-time solvable.
Both variants find applications in approximating the two-machine flow-shop scheduling problem
in which job processing has constraints that are formulated as a conflict graph.
For the unit jobs, we present a $4/3$-approximation algorithm for the scheduling problem with an arbitrary conflict graph,
based on the exact algorithm for the variants of the path cover problem.
For the arbitrary jobs while the conflict graph is the union of two disjoint cliques, that is,
all the jobs can be partitioned into two groups such that the jobs in a group are pairwise conflicting,
we present a simple $3/2$-approximation algorithm.
%==================================================================================================
\end{abstract}

\section{Introduction}
%==================================================================================================
Scheduling is a well established research area that finds numerous applications in modern manufacturing industry and in operations research at large.
All scheduling problems modeling real-life applications have at least two components, the machines and the jobs.
In one big category of problems that have received intensive studies,
scheduling constraints are imposed between a machine and a job,
such as a time interval during which the job is allowed to be processed {\em nonpreemptively} on the machine,
while the machines are considered as independent from each other, so are the jobs.
For example, the parallel machine scheduling (the {\em multiprocessor scheduling} in \cite{GJ79}) is one of the first studied problems,
denoted as $Pm \mid \mid C_{\max}$ in the three-field notation~\cite{GLL79},
in which a set of jobs each needs to be processed by one of the $m$ given identical machines,
with the goal to minimize the maximum job completion time (called the {\em makespan});
the $m$-machine flow-shop scheduling (the {\em flow-shop scheduling} in \cite{GJ79}) is another first-studied problem,
denoted as $Fm \mid \mid C_{\max}$,
in which a set of jobs each needs to be processed by all the $m$ given machines in the same sequential order, with the goal to minimize the makespan.

In another category of scheduling problems, additional but limited resources are required for the machines to process the jobs~\cite{GG75}.
The resources are renewable but normally non-sharable in practice;
the jobs competing for the same resource have to be processed at different time if their total demand for a certain resource exceeds the supply.
Scheduling with resource constraints~\cite{GG75,GJ75} or scheduling with conflicts ({\sc SwC})~\cite{EHK09}
also finds numerous applications~\cite{BJ95,BC96,HKP03} and has attracted as much attention as the non-constrained counterpart.
In this paper, we use {\sc SwC} to refer to the nonpreemptive scheduling problems with additional constraints or conflicting relationships among the jobs
to disallow them to be processed concurrently on different machines.
We remark that in the literature, {\sc SwC} is also presented as the scheduling with agreements ({\sc SwA}),
in which a subset of jobs can be processed concurrently on different machines if and only if they are agreeing with each other~\cite{BB12,BB16}.
While in the most general scenario a conflict could involve multiple jobs,
in this paper we consider only those conflicts each involves two jobs and
consequently all the conflicts under consideration can be presented as a {\em conflict graph} $G = (V, E)$,
where $V$ is the set of jobs and an edge $e = (J_{j_1}, J_{j_2}) \in E$ represents a conflicting pair such that the two jobs $J_{j_1}$ and $J_{j_2}$ cannot be processed concurrently on different machines in any feasible schedule.

Extending the three-field notation \cite{GLL79},
the parallel machine {\sc SwC} with a conflict graph $G = (V, E)$ (also abbreviated as {\sc SCI} in the literature)~\cite{EHK09}
is denoted as $Pm \mid G = (V, E), p_j \mid C_{\max}$,
where the first field $Pm$ tells that there are $m$ parallel identical machines,
the second field describes the conflict graph $G = (V, E)$ over the set $V$ of all the jobs,
where the job $J_j$ requires a non-preemptive processing time of $p_j$ on any machine,
and the last field specifies the objective function to minimize the makespan $C_{\max}$.
One clearly sees when $E = \emptyset$, $Pm \mid G = (V, E), p_j \mid C_{\max}$ reduces to the classical multiprocessor scheduling $Pm \mid \mid C_{\max}$,
which is already NP-hard for $m \ge 2$~\cite{GJ79}.
Indeed, with $m$ either a given constant or part of input, $Pm \mid G = (V, E), p_j \mid C_{\max}$ is more difficult to approximate,
and there is a line of rich research to consider the unit jobs (that is, $p_j = 1$) and/or to consider certain special classes of conflict graphs.
The interested reader might see \cite{EHK09} and the references therein.

In this paper, we are interested in approximating the two-machine flow-shop {\sc SwC}.

In the general $m$-machine (also called $m$-stage) flow-shop~\cite{GJ79} denoted as $Fm \mid \mid C_{\max}$,
there are $m \ge 2$ machines $M_1, M_2, \ldots, M_m$,
a set $V$ of jobs each job $J_j$ needs to be processed through $M_1, M_2, \ldots, M_m$ sequentially 
with processing times $p_{1j}, p_{2j}, \ldots, p_{mj}$ respectively.
When $m = 2$, the two-machine flow-shop problem is polynomial time solvable, by Johnson's algorithm~\cite{Joh54};
the $m$-machine flow-shop problem when $m \ge 3$ is {\em strongly} NP-hard~\cite{GJS76}.
After several efforts~\cite{Joh54,GJS76,GS78,CGP96},
Hall presented a polynomial-time approximation scheme (PTAS) for the $m$-machine flow-shop problem, for any fixed integer $m \ge 3$~\cite{Hall98}.
When $m$ is part of input ({\it i.e.} an arbitrary integer), there is no known constant ratio approximation algorithm,
and the problem cannot be approximated within $1.25$ unless {\sc P = NP}~\cite{WHH97}.

The $m$-machine flow-shop {\sc SwC} was first studied in 1980's.
Blazewicz et al.~\cite{BLR83} considered multiple resource characteristics including the number of resource types,
resource availabilities and resource requirements;
they expanded the middle field of the three-field notation to express these resource characteristics,
for which the conflict relationships are modeled by complex structures such as hypergraphs.
At the end, they proved complexity results for several variants in which either the conflict relationships are simple enough or only the unit jobs are considered.
Further studies on more variants can be found in \cite{Roc83,Roc84,BCS86,BKS88,SKE92}.
In this paper, we consider those conflicts each involves only two jobs such that
all the conflicts under consideration can be presented as a conflict graph $G = (V, E)$.
The $m$-machine flow-shop scheduling with a conflict graph $G = (V, E)$ is denoted as $Fm \mid G = (V, E), p_{ij} \mid C_{\max}$.
We remark that our notation is slightly different from the one introduced by Blazewicz et al.~\cite{BLR83},
which uses a prefix ``res'' in the middle field for describing the resource characteristics.

Several applications of the $m$-machine flow-shop scheduling with a conflict graph were mentioned in the literature.
In a typical example of scheduling medical tests in an outpatient health care facility
where each patient (the job) needs to do a sequence of $m$ tests (the machines),
a patient must be accompanied by their doctor during a test and thus two patients under the care of the same doctor cannot go for tests simultaneously.
That is, two patients of the same doctor are conflicting to each other, and all the conflicts can be effectively described as a graph $G = (V, E)$,
where $V$ is the set of all the patients and an edge represents a conflicting pair of patients.

In two recent papers~\cite{TB17,TB18}, Tellache and Boudhar studied the problem $F2 \mid G = (V, E), p_{ij} \mid C_{\max}$, which they denote as {\sc FSC}.
In \cite{TB18}, the authors summarized and/or proved several complexity results;
to name a few, 
$F2 \mid G = (V, E), p_{ij} \mid C_{\max}$ is strongly NP-hard when $G = (V, E)$ is the complement of a complete split graph~\cite{TB18,BLR83}
(that is, $G$ is the union of a clique and an independent set),
$F2 \mid G = (V, E), p_{ij} \mid C_{\max}$ is weakly NP-hard when $G = (V, E)$ is the complement of a complete bipartite graph~\cite{TB18}
(that is, $G$ is the union of two disjoint cliques),
and for an arbitrary conflict graph $G = (V, E)$, $F2 \mid G = (V, E), p_{ij} = 1 \mid C_{\max}$ is strongly NP-hard~\cite{TB18}.
In \cite{TB17}, the authors proposed three mixed-integer linear programming models and a branch and bound algorithm to solve the last variant
$F2 \mid G = (V, E), p_{ij} = 1 \mid C_{\max}$ exactly;
their empirical study shows that the branch and bound algorithm outperforms and can solve instances of up to $20,000$ jobs.

In this paper, we pursue approximation algorithms with provable performance for the NP-hard variants of
the two-machine flow-shop scheduling with a conflict graph.
In Section~2, we present a $4/3$-approximation for the strongly NP-hard scheduling problem $F2 \mid G = (V, E), p_{ij} = 1 \mid C_{\max}$
for the unit jobs with an arbitrary conflict graph.
In Section~3, we present a simple $3/2$-approximation for the weakly NP-hard scheduling problem $F2 \mid G = K_\ell \cup K_{n-\ell}, p_{ij} \mid C_{\max}$
for arbitrary jobs with a conflict graph that is the union of two disjoint cliques (that is, the complement of a complete bipartite graph).
Some concluding remarks are provided in Section 4.

\section{Approximating $F2 \mid G = (V, E), p_{ij} = 1 \mid C_{\max}$}
%==================================================================================================
Tellache and Boudhar proved that $F2 \mid G = (V, E), p_{ij} = 1 \mid C_{\max}$ is strongly NP-hard by a reduction from the well known Hamiltonian path problem,
which is strongly NP-complete~\cite{GJ79}.
Furthermore, they remarked that $F2 \mid G = (V, E), p_{ij} = 1 \mid C_{\max}$ has a feasible schedule of makespan $C_{\max} = n + k$
if and only if the complement $\overline{G}$ of the conflict graph $G$, called the agreement graph, has a path cover of size $k$
(that is, a collection of $k$ vertex-disjoint paths that covers all the vertices of the graph $\overline{G}$),
where $n$ is the number of jobs (or vertices) in the instance.
This way, $F2 \mid G = (V, E), p_{ij} = 1 \mid C_{\max}$ is polynomially equivalent to the {\sc Path Cover} problem,
which is NP-hard even on some special classes of graphs including planar graphs~\cite{GJT76}, bipartite graphs~\cite{Gol04}, chordal graphs~\cite{Gol04},
chordal bipartite graphs~\cite{Mul96} and strongly chordal graphs~\cite{Mul96}.
In terms of approximability, to the best of our knowledge there is no $o(n)$-approximation for the Path Cover problem.

We give some terminologies first.
The conflict graphs considered in this paper are all simple graphs.
All paths and cycles in a graph are also simple.
The number of edges on a path/cycle defines the length of the path/cycle.
A length-$k$ path/cycle is also called a $k$-path/cycle for short.
Note that a single vertex is regarded as a $0$-path, while a cycle has length at least $3$.
For an integer $b \ge 1$, a $b$-matching of a graph is a spanning subgraph in which every vertex has degree no greater than $b$;
a maximum $b$-matching is a $b$-matching that contains the maximum number of edges.
A maximum $b$-matching of a graph can be computed in $O(m^2 \log n \log b)$-time,
where $n$ and $m$ are the number of vertices and the number of edges in the graph, respectively~\cite{Gab83}.
Clearly, a graph could have multiple distinct maximum $b$-matchings.

Given a graph, a path cover is a collection of vertex-disjoint paths in the graph that covers all the vertices,
and the size of the path cover is the number of paths therein.
The {\sc Path Cover} problem is to find a path cover of a given graph of the minimum size,
and the well known Hamiltonian path problem is to decide whether a given graph has a path cover of size $1$.
Besides the {\sc Path Cover} problem, many its variants have also been studied in the literature~\cite{AN07,PH08,AN10,RTM14}.
We mentioned earlier that Tellache and Boudhar proved that $F2 \mid G = (V, E), p_{ij} = 1 \mid C_{\max}$ is polynomially equivalent to the {\sc Path Cover} problem,
but to the best of our knowledge there is no approximation algorithm designed for $F2 \mid G = (V, E), p_{ij} = 1 \mid C_{\max}$.
Nevertheless, one easily sees that,
since $F2 \mid G = (V, E), p_{ij} = 1 \mid C_{\max}$ has a feasible schedule of makespan $C_{\max} = n + k$
if and only if the complement $\overline{G}$ of the conflict graph $G$ has a path cover of size $k$,
a trivial algorithm simply processing the jobs one by one (each on the first machine $M_1$ and then on the second machine $M_2$)
produces a schedule of makespan $C_{\max} = 2n$,
and thus is a $2$-approximation algorithm.

In this section, we will design two approximation algorithms with improved performance ratios for $F2 \mid G = (V, E), p_{ij} = 1 \mid C_{\max}$.
These two approximation algorithms are based on our polynomial time exact algorithms for two variants of the {\sc Path Cover} problem, respectively.
We start with the first variant called the {\em Path Cover with the minimum number of $0$-paths},
in which we are given a graph and we aim to find a path cover that contains the minimum number of $0$-paths.
In the second variant called the {\em Path Cover with the minimum number of $\{0, 1\}$-paths},
we aim to find a path cover that contains the minimum total number of $0$-paths and $1$-paths.
We remark that in both variants, we do not care about the size of the path cover.

\subsection{Path Cover with the minimum number of $0$-paths}
%--------------------------------------------------------------------------------------------------
Recall that in this variant of the {\sc Path Cover} problem, given a graph, we aim to find a path cover that contains the minimum number of $0$-paths.
The given graph is the complement $\overline{G} = (V, \overline{E})$ of the conflict graph $G = (V, E)$ in $F2 \mid G = (V, E), p_{ij} = 1 \mid C_{\max}$.
We next present a polynomial time algorithm that finds for $\overline{G}$ a path cover that contains the minimum number of $0$-paths.

In the first step, we apply any polynomial time algorithm to find a maximum $2$-matching in $\overline{G}$, denoted as $M$;
recall that this can be done in $O(m^2 \log n)$-time, where $n = |V|$ and $m = |\overline{E}|$.
$M$ is a collection of vertex-disjoint paths and cycles;
let ${\cal P}_0$ (${\cal P}_1$, ${\cal P}_2$, ${\cal P}_{\ge 3}$, ${\cal C}$, respectively) denote the sub-collection of $0$-paths
($1$-paths, $2$-paths, paths of length at least $3$, cycles, respectively) in $M$.
That is, $M = {\cal P}_0 \cup {\cal P}_1 \cup {\cal P}_2 \cup {\cal P}_{\ge 3} \cup {\cal C}$.

Clearly, if ${\cal P}_0 = \emptyset$, then we have a path cover containing no $0$-paths after removing one edge per cycle in ${\cal C}$.
In the following discussion we assume the existence of a $0$-path, which is often called a {\em singleton}.
We also call an ending vertex of a $k$-path with $k \ge 1$ as an endpoint for simplicity.
The following lemma is trivial due to the edge maximality of $M$.

\begin{lemma}
\label{lemma21}
All the singletons and endpoints in the maximum $2$-matching $M$ are pairwise non-adjacent to each other in the underlying graph $\overline{G}$.
\end{lemma}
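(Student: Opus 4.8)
The plan is to argue by contradiction directly from the edge maximality of $M$. First I would record the one structural fact that does all the work: in the $2$-matching $M$ a singleton has degree $0$ and an endpoint has degree $1$, so the singletons and endpoints are exactly the vertices whose $M$-degree is at most $1$. This reduces the assertion to the claim that no two vertices of $M$-degree at most $1$ are joined by an edge of $\overline{G}$, apart from an edge that already lies in $M$.

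Next I would suppose, for contradiction, that two such vertices $u$ and $v$ satisfy $(u,v) \in \overline{E}$ with $(u,v) \notin M$, and form the spanning subgraph $M' = M \cup \{(u,v)\}$. Adding this edge raises only the degrees of $u$ and $v$, each by one; since both had degree at most $1$ in $M$, every vertex of $M'$ still has degree at most $2$, so $M'$ is again a $2$-matching, now with one more edge than $M$. This contradicts the maximality of $M$. A single observation thus settles every combination at once: two singletons, a singleton and an endpoint, two endpoints of distinct paths, and even the two endpoints of one path of length at least $2$ (where the new edge merely closes a cycle, which is still a legal $2$-matching).

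The one point needing care, and the only real obstacle, is the hypothesis $(u,v) \notin M$: if the edge were already in $M$ no new edge could be added and the swap argument would fail. I would dispose of this by checking that $(u,v) \in M$ can happen only when $u$ and $v$ are the two endpoints of a $1$-path. Indeed, a singleton has degree $0$ and so lies on no edge of $M$, while an endpoint of a path of length at least $2$ has its unique $M$-edge incident to an internal, degree-$2$ vertex rather than to another endpoint or singleton. Hence the only $M$-edge ever joining two members of our set is the defining edge of a $1$-path, and for every other pair the conclusion of the lemma holds. I would close by noting that this is precisely why the statement is labelled trivial: it merely says that a maximum $2$-matching admits no augmenting single edge between two degree-deficient vertices.
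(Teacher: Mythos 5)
Your proof is correct and is exactly the argument the paper intends: the paper gives no written proof, asserting only that the lemma is ``trivial due to the edge maximality of $M$,'' and your augmenting-edge argument (adding an edge of $\overline{E}\setminus M$ between two vertices of $M$-degree at most $1$ yields a larger $2$-matching) is precisely that maximality argument made explicit. Your further observation that the two endpoints of a single $1$-path form the one genuine exception, being joined by an edge of $M$ itself, is a precision the paper's statement glosses over; it is harmless because every later use of the lemma concerns a singleton or endpoint and a vertex of a \emph{different} component of $M$.
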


Let $v_0$ be a singleton.
If $v_0$ is adjacent to a vertex $v_1$ on a cycle of ${\cal C}$ in the underlying graph $\overline{G}$,
then we may delete a cycle-edge incident at $v_1$ from $M$ while add the edge $(v_0, v_1)$ to $M$ to achieve another maximum $2$-matching with one less singleton.
Similarly, if $v_0$ is adjacent to a vertex $v_1$ on a path of ${\cal P}_{\ge 3}$ (note that $v_1$ has to be an internal vertex on the path by Lemma~\ref{lemma21})
in the underlying graph $\overline{G}$,
then we may delete a certain path-edge incident at $v_1$ from $M$ while add the edge $(v_0, v_1)$ to $M$
to achieve another maximum $2$-matching with one less singleton.
In either of the above two cases, assume the edge deleted from $M$ is $(v_1, v_2)$;
then we say the {\em alternating} path $v_0$-$v_1$-$v_2$ {\em saves} the singleton $v_0$.

In the general setting, in the underlying graph $\overline{G}$,
$v_0$ is adjacent to the middle vertex $v_1$ of a $2$-path $P_1$,
one endpoint $v_2$ of $P_1$ is adjacent to the middle vertex $v_3$ of another $2$-path $P_2$,
one endpoint $v_4$ of $P_2$ is adjacent to the middle vertex $v_5$ of another $2$-path $P_3$,
and so on,
one endpoint $v_{2i-2}$ of $P_{i-1}$ is adjacent to the middle vertex $v_{2i-1}$ of another $2$-path $P_i$,
one endpoint $v_{2i}$ of $P_i$ is adjacent to a vertex $v_{2i+1}$ of a cycle of ${\cal C}$ or a path of ${\cal P}_{\ge 3}$ (see an illustration in Figure~\ref{fig01}),
on which the edge $(v_{2i+1}, v_{2i+2})$ is to be deleted.
Then we may delete the edges $\{(v_{2j+1}, v_{2j+2}) \mid j = 0, 1, \ldots, i\}$ from $M$ while add the edges $\{(v_{2j}, v_{2j+1}) \mid j = 0, 1, \ldots, i\}$ to $M$
to achieve another maximum $2$-matching with one less singleton;
and we say the {\em alternating} path $v_0$-$v_1$-$v_2$-$\ldots$-$v_{2i}$-$v_{2i+1}$-$v_{2i+2}$ {\em saves} the singleton $v_0$.

\begin{figure}[h]
\centering
  \setlength{\unitlength}{0.7bp}%
  \begin{picture}(363.53, 164.44)(0,0)
  \put(0,0){\includegraphics[scale=0.7]{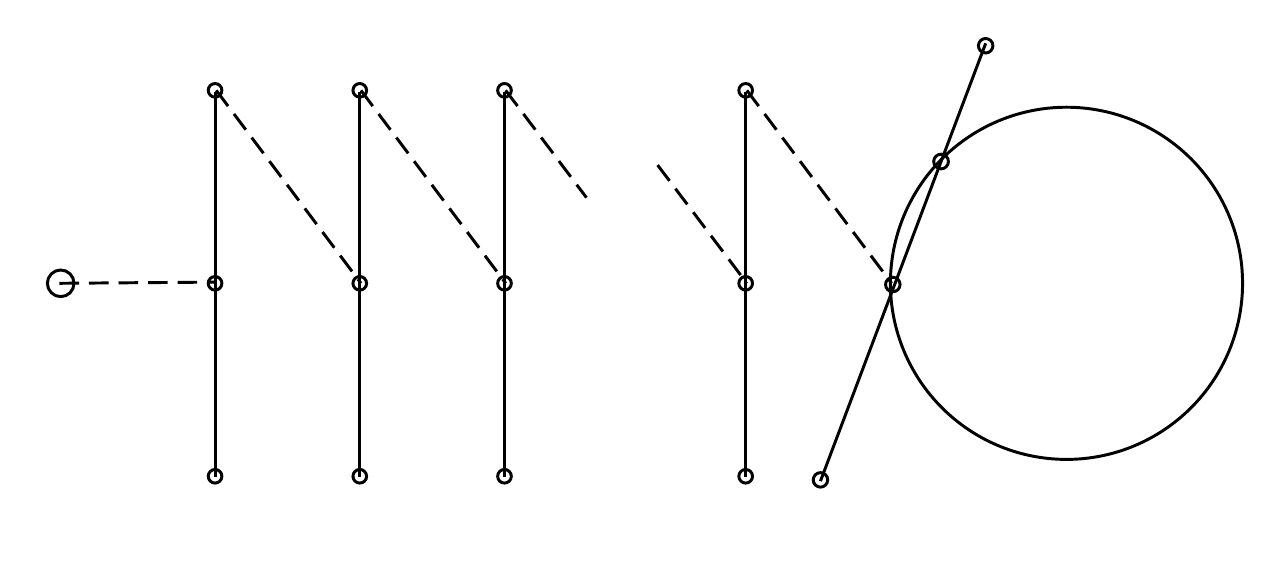}}
  \put(5.67,64.30){\fontsize{14.23}{17.07}\selectfont $v_0$}
  \put(61.24,64.30){\fontsize{14.23}{17.07}\selectfont $v_1$}
  \put(54.29,147.66){\fontsize{14.23}{17.07}\selectfont $v_2$}
  \put(102.92,64.30){\fontsize{14.23}{17.07}\selectfont $v_3$}
  \put(144.60,64.30){\fontsize{14.23}{17.07}\selectfont $v_5$}
  \put(214.06,64.30){\fontsize{14.23}{17.07}\selectfont $v_{2i-1}$}
  \put(262.69,78.20){\fontsize{14.23}{17.07}\selectfont $v_{2i+1}$}
  \put(95.97,147.66){\fontsize{14.23}{17.07}\selectfont $v_4$}
  \put(137.65,147.66){\fontsize{14.23}{17.07}\selectfont $v_6$}
  \put(207.12,147.66){\fontsize{14.23}{17.07}\selectfont $v_{2i}$}
  \put(269.64,105.98){\fontsize{14.23}{17.07}\selectfont $v_{2i+2}$}
  \put(54.29,8.73){\fontsize{14.23}{17.07}\selectfont $P_1$}
  \put(95.97,8.73){\fontsize{14.23}{17.07}\selectfont $P_2$}
  \put(137.65,8.73){\fontsize{14.23}{17.07}\selectfont $P_3$}
  \put(207.12,8.73){\fontsize{14.23}{17.07}\selectfont $P_i$}
  \put(172.39,8.73){\fontsize{14.23}{17.07}\selectfont $\ldots$}
  \put(165.44,85.14){\fontsize{14.23}{17.07}\selectfont $\ldots$}
  \put(172.39,147.66){\fontsize{14.23}{17.07}\selectfont $\ldots$}
  \end{picture}%
\caption{An alternating path $v_0$-$v_1$-$v_2$-$\ldots$-$v_{2i}$-$v_{2i+1}$-$v_{2i+2}$ that saves the singleton $v_0$,
	where the last two vertices are on a cycle of ${\cal C}$ or a path of ${\cal P}_{\ge 3}$.
	In the figure, solid edges are in the maximum $2$-matching $M$ and dashed edges are outside of $M$.\label{fig01}}
\end{figure}

\begin{lemma}
\label{lemma22}
Given a maximum $2$-matching $M$ and a singleton $v_0$ therein,
finding a simple alternating path to save $v_0$, if exists, can be done in $O(m)$ time, where $m = |\overline{E}|$.
\end{lemma}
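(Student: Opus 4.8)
The plan is to cast the search for a saving alternating path as a single breadth-first (or depth-first) traversal of $\overline{G}$ that respects the matching structure of $M$, and then to bound its running time by the standard linear-scan argument. Since $M$ is already given as a disjoint union of paths and cycles, the role of each vertex---singleton, endpoint, middle of a $2$-path, internal vertex of a path in $\mathcal{P}_{\ge 3}$, or a vertex on a cycle of $\mathcal{C}$---is available directly from $M$. The key structural fact I would invoke is Lemma~\ref{lemma21}: every vertex that appears at an ``odd'' position of the alternating path (namely $v_0$ and every $2$-path endpoint from which we extend via a non-matching edge) is non-adjacent in $\overline{G}$ to all singletons and all endpoints. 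Consequently, from such a vertex the next (dashed) edge can only land on the middle of a $2$-path, on an internal vertex of a path in $\mathcal{P}_{\ge 3}$, or on a vertex of a cycle.

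With this observation the traversal becomes clean. I would maintain a queue of ``frontier'' vertices, initialized with $v_0$, and a set of already-visited $2$-paths. When a frontier vertex $u$ is dequeued, I scan its adjacency list in $\overline{G}$; if some neighbor lies on a cycle or is internal to a path of $\mathcal{P}_{\ge 3}$, the search halts with success, and otherwise every neighbor is the middle $m_P$ of some $2$-path $P$. For each not-yet-visited such $P$, I mark $P$ visited, record parent pointers, and enqueue both endpoints of $P$ as new frontier vertices (the incoming non-matching edge, together with either matching edge $(m_P, \text{endpoint})$, correctly continues the alternation). The traversal fails exactly when the queue empties without ever reaching a cycle or a long path. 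Note that $1$-paths never interfere, since both of their vertices are endpoints and hence, by Lemma~\ref{lemma21}, unreachable from any frontier vertex.

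The correctness argument has two directions. For soundness, when the search succeeds I would reconstruct the path by following parent pointers back to $v_0$; because each $2$-path is visited at most once and distinct $2$-paths (and the terminal cycle/long-path vertex) share no vertices, the reconstructed walk is a \emph{simple} alternating path of exactly the form $v_0\text{-}v_1\text{-}\ldots\text{-}v_{2i}\text{-}v_{2i+1}\text{-}v_{2i+2}$ depicted in Figure~\ref{fig01}, and the deletion/addition of edges along it is precisely the saving operation described before the lemma. For completeness, any saving alternating path is by definition a chain $v_0 \to P_1 \to \cdots \to P_i \to (\text{cycle or long path})$ through $2$-paths, so standard breadth-first reachability guarantees that if such a chain exists the search reaches its terminal vertex.

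The running-time bound is the routine part: each frontier vertex is enqueued at most once (the two endpoints of a $2$-path only when it is first visited, and $v_0$ only at the start), so each relevant adjacency list in $\overline{G}$ is scanned at most once and every edge is examined $O(1)$ times; since only vertices incident to an examined edge are ever processed and the roles of vertices in $M$ are read off from $M$ itself, the whole traversal runs in $O(m)$ time. The one point requiring genuine care---and the step I expect to be the main obstacle---is verifying that the edge-swapping along the reconstructed path preserves a valid $2$-matching and creates no new singleton at any intermediate $2$-path or at the terminus. This reduces to a local degree check: each visited middle trades its matching edge to one endpoint for the incoming edge from the previous vertex (so its degree stays $2$ and no vertex of that $2$-path is orphaned), and at the terminus one deletes the incident edge whose removal leaves a detached component of size $\ge 2$, which always exists precisely because the terminus lies on a cycle or on a path of length $\ge 3$.
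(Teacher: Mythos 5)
Your proposal is correct and takes essentially the same approach as the paper: the paper likewise reduces the search to a single $O(m)$-time BFS, constructing a digraph whose oriented edges (out of $v_0$, from the middle of each $2$-path to its endpoints, from endpoints to middles of other $2$-paths, and from endpoints to vertices of cycles or paths of $\mathcal{P}_{\ge 3}$) encode exactly your frontier-expansion rules. The only minor difference is in how simplicity is handled --- the paper first argues that any non-simple alternating path can be shortened by removing even-length alternating cycles, whereas you obtain simplicity directly from marking each $2$-path as visited at most once --- a bookkeeping variation, not a different argument.
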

\begin{proof}
Firstly, if an alternating path is not simple, then a cycle that forms a subpath is also alternating and has an even length,
and thus the cycle can be removed resulting in a shorter alternating path.
Repeating this process if necessary, at the end we achieve a simple alternating path.
Therefore, we can limit the search for a simple alternating path.

Note that the edges on all possible alternating paths that save $v_0$ can be of the following four kinds:
1) all those edges incident at $v_0$, each oriented out of $v_0$;
2) all those edges of the $2$-paths, each oriented from the middle vertex to the endpoint;
3) all those edges each connecting an endpoint of a $2$-path to the middle vertex of another $2$-path, oriented from the endpoint to the middle vertex;
4) all those edges each connecting an endpoint of a $2$-path to a vertex on some path of ${\cal P}_{\ge 3}$ or on some cycle of ${\cal C}$,
	oriented out of the endpoint.
If follows that by a BFS ({\em breadth-first search}) traversal starting from $v_0$ in the digraph formed by the above four kinds of oriented edges,
if a vertex on some path of ${\cal P}_{\ge 3}$ or on some cycle of ${\cal C}$ can be reached then we achieve a simple alternating path;
otherwise, we conclude that no alternating path saving the singleton $v_0$ exists.
Both construction of the digraph and the BFS traversal take $O(m)$ time.
This proves the lemma.
\end{proof}

The second step of the algorithm is to iteratively find a simple alternating path to save a singleton;
it terminates when no alternating path is found.
The resulting maximum $2$-matching is still denoted as $M$.

In the last step, we break the cycles in $M$ by deleting one edge per cycle to produce a path cover.
Denote our algorithm as {\sc Algorithm A}, of which a high-level description is provided in Figure~\ref{fig02}.
We will prove in the next theorem that the path cover produced by {\sc Algorithm A} contains the minimum number of $0$-paths.

\begin{figure}[h]
\begin{center}
\framebox{
\begin{minipage}{4.5in}
{\sc Algorithm A}($\overline{G} = (V, \overline{E})$):
\begin{description}
\parskip=0pt
\item[Step 1.]
	Compute a maximum $2$-matching $M$;
\item[Step 2.]
	repeatedly find an alternating path to save a singleton in $M$,\\
	till either no singleton exists or no alternating path is found;
\item[Step 3.]
	break cycles in $M$ by removing one edge per cycle, and\\
	return the resulting path cover.
\end{description}
\end{minipage}}
\end{center}
\caption{A high-level description of {\sc Algorithm A} for computing a path cover in the agreement graph $\overline{G} = (V, \overline{E})$.\label{fig02}}
\end{figure}

\begin{theorem}
\label{thm23}
{\sc Algorithm A} is an $O(m^2 \log n)$-time algorithm for computing a path cover with the minimum number of $0$-paths
in the agreement graph $\overline{G} = (V, \overline{E})$.
\end{theorem}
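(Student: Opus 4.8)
The plan is to split the statement into a running-time bound and a correctness (optimality) claim, and to handle correctness by first translating everything into the language of $2$-matchings and then running an augmenting-path (exchange) argument. For the running time, Step~1 computes a maximum $2$-matching in $O(m^2\log n)$ time, as already recorded. Step~2 performs a sequence of alternating-path searches; by \cref{lemma22} each search costs $O(m)$, and since each successful search strictly decreases the number of singletons (it saves one singleton and, by the construction preceding the lemma, creates none), there are at most $n$ successful searches. To control the unsuccessful searches I would first discard the vertices that are isolated in $\overline{G}$ (these are forced $0$-paths in every path cover and can be set aside), after which $m\ge n/2$, so that the $O(nm)$ cost of Step~2 together with the $O(m)$ cost of Step~3 is absorbed into $O(m^2\log n)$; I would also argue that a singleton for which no alternating path is found remains unsavable under later exchanges, so that each singleton is searched from only once.

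For correctness I would first establish an exact correspondence between the two objective values. A path cover is precisely a cycle-free $2$-matching; breaking one edge of each cycle (every cycle has length $\ge 3$) turns any $2$-matching into a path cover without changing its number of isolated vertices, and conversely a path cover is a $2$-matching whose $0$-paths are exactly its isolated vertices. Hence the minimum number of $0$-paths over all path covers equals the minimum number of isolated vertices (singletons) over all $2$-matchings; moreover, since adjoining an edge to a $2$-matching never increases its number of singletons, this minimum is already attained by some \emph{maximum} $2$-matching. Because the exchanges in Step~2 delete and add equally many edges, the working set $M$ stays a maximum $2$-matching, and Step~3 converts it to a path cover whose $0$-paths equal the singletons of $M$. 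Thus it suffices to prove that the $M$ returned by Step~2 has the fewest singletons among all maximum $2$-matchings.

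The hard part is this last optimality statement, namely that termination of Step~2 certifies a singleton-minimum $2$-matching. I would prove the contrapositive by an exchange argument: assuming a maximum $2$-matching $M^*$ with strictly fewer singletons, a counting argument produces a vertex $s$ that is a singleton of $M$ but not of $M^*$, and I would trace an $M$-alternating path inside the symmetric difference $M\oplus M^*$ starting from $s$, using $M^*\setminus M$ edges for the additions and $M\setminus M^*$ edges for the deletions. The main obstacle is that $M\oplus M^*$ may have vertices of degree up to $4$, so its components are not simply paths and cycles as in ordinary matching theory; the work is to show that the alternating search from $s$ can be routed (as in the BFS of \cref{lemma22}, through middles of $2$-paths and endpoints, which by \cref{lemma21} are pairwise non-adjacent) until it reaches a cycle or a path of length $\ge 3$ of $M$, which is exactly a singleton-saving alternating path and contradicts termination. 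An alternative I would keep in reserve is a deficiency-type argument: when Step~2 halts, the set of vertices reachable from the singletons via alternating paths consists only of singletons and $2$-paths, and this set serves as a Hall/Tutte-style bottleneck certifying that no $2$-matching can cover it while leaving fewer isolated vertices.
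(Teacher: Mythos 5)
Your running-time analysis is essentially fine, and your preprocessing of isolated vertices actually patches a point the paper glosses over (it silently treats $O(nm)$ as dominated by $O(m^2\log n)$). One caution: under your accounting, the assertion that a singleton found unsavable stays unsavable after later exchanges is load-bearing for \emph{correctness}, not just for speed, since the optimality argument needs that at termination \emph{no} current singleton admits a saving alternating path; the claim is true, but you only assert it, and proving it requires the same kind of structural analysis as the main argument. The paper sidesteps this entirely by gluing all current singletons into a single BFS source, so that each $O(m)$ iteration either saves some singleton or certifies that none is savable.

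The correctness half has two genuine gaps. First, your reduction step claiming that the minimum number of singletons over all $2$-matchings is attained by a \emph{maximum} $2$-matching ``since adjoining an edge never increases the number of singletons'' does not hold up: a $2$-matching to which no edge can be adjoined need not be maximum (a triangle inside $K_4$ admits no edge addition, yet the $4$-cycle is larger), so edge-adjoining cannot in general reach a maximum $2$-matching. Repairing this requires a Berge-type augmenting-path theorem for $2$-matchings, i.e., exactly the degree-$4$ symmetric-difference machinery you hoped to avoid; the paper needs no such lemma because it compares $M$ directly with an arbitrary optimal path cover $M^*$, using only that $M^*$ has maximum degree $2$. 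Second, and more seriously, your primary exchange argument stalls in precisely the hardest case: the trace from $s$ through middles and endpoints of $2$-paths can arrive at an endpoint that is a \emph{singleton of $M^*$}; there the trace ends, the corresponding swap merely trades one singleton for another, and no contradiction arises. This is exactly the case the paper's minimal-counterexample device exists to kill (a common singleton of the two matchings is deleted to yield a smaller counterexample), after which the contradiction is obtained not by ``reaching a cycle or long path'' but by a supply/demand count: the endpoints of the $2$-paths encountered always outnumber the available middle vertices, forcing infinitely many distinct $2$-paths and contradicting finiteness.

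By contrast, the argument you keep ``in reserve'' is sound, closes both gaps at once, and is genuinely different from (and arguably cleaner than) the paper's proof; I would promote it to the main argument. When Step~2 halts with $k$ singletons, let $Y$ be the set of middle vertices of the $2$-paths reachable by the BFS of \cref{lemma22}, say $t$ of them, and let $X$ consist of the $k$ singletons together with the $2t$ endpoints of those $2$-paths. The maximality of $M$, \cref{lemma21}, and the failure of the BFS to reach a cycle or a path of length at least $3$ together give $N_{\overline{G}}(X)\subseteq Y$. Any $2$-matching $M'$ (in particular, any path cover) has at most $2|Y|=2t$ edge-endpoints available inside $Y$, so at most $2t$ of the $k+2t$ vertices of $X$ are non-isolated in $M'$, whence $M'$ has at least $k$ singletons. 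Since the path cover output by {\sc Algorithm A} has exactly $k$ $0$-paths, this certifies optimality against all path covers in one stroke, with no minimal counterexample and no need for the comparison matching to be maximum.
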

\begin{proof}
Recall that the last step of {\sc Algorithm A} is to break cycles only.
We thus use the maximum $2$-matching achieved at the end of the second step in the following proof, denoted as $M$.
We point out that in the second step, in each iteration where an alternating path is found to save a singleton of the current maximum $2$-matching,
we swap the edges on the alternating path inside the matching with the edges outside of the matching to
{\em move} from the current maximum $2$-matching to another maximum $2$-matching which contains one less singleton.

We prove the theorem by the minimal counterexample.

Recall that ${\cal P}_0$ contains all the singletons (that is, $0$-paths) in $M$.
Let $M^*$ be an optimal path cover that contains the minimum number of singletons, and let ${\cal P}_0^*$ denote this collection of singletons.
Assume to the contrary that the path cover obtained from $M$ contains more than the minimum number of singletons, then we must have
\begin{equation}
\label{eq1}
|{\cal P}_0| > |{\cal P}_0^*| \ge 0.
\end{equation}
Assume our agreement graph $\overline{G} = (V, \overline{E})$ is a minimal graph on which Eq.~(\ref{eq1}) holds,
then $M$ and $M^*$ should not have any common singleton as otherwise it can be deleted to obtain a smaller graph.
That is,
\begin{equation}
\label{eq2}
{\cal P}_0 \cap {\cal P}_0^* = \emptyset.
\end{equation}

It follows that a singleton $v_0 \in {\cal P}_0$ is not a singleton in $M^*$.
Suppose $(v_0, v_1) \in M^*$.
From the edge maximality of $M$ and the non-existence of an alternating path, we conclude that $v_1$ has to be the middle vertex of some $2$-path $P_1 \in {\cal P}_2$.

Let $u_1$ and $v_2$ be the two endpoints of the $2$-path $P_1$.
For the same reason as for the singleton $v_0$, from the edge maximality of $M$ and the non-existence of an alternating path,
we conclude that in $\overline{G}$ each of $u_1$ and $v_2$ can be adjacent to only the middle vertices of $2$-paths, including $v_1$.
On the other hand, we conclude that none of $u_1$ and $v_2$ can be a singleton in $M^*$.
We prove this by contradiction to assume for instance $v_2$ is a singleton in $M^*$;
then the alternating path $v_0$-$v_1$-$v_2$ would save $v_0$ but leave $v_2$ as a new singleton,
which subsequently gives rise to another maximum $2$-matching $M'$ with the same number of singletons but $M'$ shares with $M^*$ a common singleton $v_2$,
a contradiction to the minimality of $\overline{G}$.

The last paragraph essentially implies that in $M^*$, each of $u_1$ and $v_2$ is adjacent to the middle vertex of a certain $2$-path of ${\cal P}_2$.
Since the edges $(u_1, v_1)$ and $(v_1, v_2)$ cannot both be in $M^*$ (otherwise $v_1$ would have degree $3$ in $M^*$),
in $M^*$ one of $u_1$ and $v_2$, and assume without loss of generality $v_2$,
is adjacent to the middle vertex $v_3$ of a $2$-path $P_2 \in {\cal P}_2$ other than $P_1$.

Let $u_2$ and $v_4$ be the two endpoints of the second $2$-path $P_2$.
For the same reason as for the singleton $v_0$, from the edge maximality of $M$ and the non-existence of an alternating path,
we conclude that each of $u_2$ and $v_4$ can be adjacent to only the middle vertices of $2$-paths, including $v_1$ and $v_3$.
On the other hand, we conclude that none of $u_2$ and $v_4$ can be a singleton in $M^*$, for the same reason as for $v_2$ in the above.
These imply that in $M^*$, each of $u_2$ and $v_4$ is adjacent to the middle vertex of a certain $2$-path of ${\cal P}_2$.
Since there are four endpoints $\{u_1, v_2, u_2, v_4\}$ but only two middle vertices $\{v_1, v_3\}$ for $P_1$ and $P_2$,
in $M^*$ one of the four endpoints $\{u_1, v_2, u_2, v_4\}$, and assume without loss of generality $v_4$,
is adjacent to the middle vertex $v_5$ of a $2$-path $P_3 \in {\cal P}_2$ other than $P_1$ and $P_2$.

Let $u_3$ and $v_6$ be the two endpoints of the third $2$-path $P_3$.
Repeat the same argument as before we have that $v_6$ is adjacent to the middle vertex $v_7$ of a fourth $2$-path $P_4 \in {\cal P}_2$.
And so on.
These contradict the fact that the graph $\overline{G}$ is finite,
and therefore the maximum $2$-matching at the end of the second step of {\sc Algorithm A} contains the minimum number of singletons.

For the running time, since in each iteration of the second step we may ``{\em glue}'' all singletons as one for finding an alternating path.
If no alternating path is found, then the second step terminates;
otherwise one can easily check which singletons are the root of the alternating path and pick to save one of them, and the iteration ends.
It follows that there could be $O(n)$ iterations and each iteration needs $O(m)$ time, and thus the total running time for the second step is $O(nm)$.
Clearly the last step can be done in $O(n)$ time.
Therefore the running time of {\sc Algorithm A} is dominated by the first step of finding a maximum $2$-matching, which is done in $O(m^2 \log n)$ time.
This finishes the proof of the theorem.
\end{proof}

\subsection{Path Cover with the minimum number of $\{0, 1\}$-paths}
%--------------------------------------------------------------------------------------------------
In this variant of the {\sc Path Cover} problem, given a graph, we aim to find a path cover that contains the minimum total number of $0$-paths and $1$-paths.
Again, the given graph is the complement $\overline{G} = (V, \overline{E})$ of the conflict graph $G = (V, E)$ in $F2 \mid G = (V, E), p_{ij} = 1 \mid C_{\max}$.
We next present a polynomial time algorithm called {\sc Algorithm B} that finds for $\overline{G}$ such a path cover.

Recall that in {\sc Algorithm A} for computing a path cover that contains the minimum number of $0$-paths,
an alternating path saving a singleton $v_0$ starts from the singleton $v_0$ and
reaches a vertex $v_{2i+1}$ on a path of ${\cal P}_{\ge 3}$ or on a cycle of ${\cal C}$ (see Figure~\ref{fig01}).
If $v_{2i+1}$ is on a cycle, then the last vertex $v_{2i+2}$ can be any one of the two neighbors of $v_{2i+1}$ on the cycle.
If $v_{2i+1}$ is on a $k$-path, then the last vertex $v_{2i+2}$ is a {\em non-endpoint} neighbor of $v_{2i+1}$ on the path (the existence is guaranteed by $k \ge 3$);
and the reason why $v_{2i+2}$ cannot be an endpoint is obvious since otherwise $v_{2i+2}$ would be left as a new singleton after the edge swapping.
In the current variant we want to minimize the total number of $0$-paths and $1$-paths;
clearly $v_{2i+2}$ cannot be an endpoint either and cannot even be the vertex adjacent to an endpoint,
for the latter case because the edge swapping saves $v_0$ but leaves a new $1$-path.
To guarantee the existence of such vertex $v_{2i+2}$, the $k$-path must have $k \ge 4$, and if $k = 4$ then $v_{2i+1}$ cannot be the middle vertex of the $4$-path.

{\sc Algorithm B} is in spirit similar to but in practice slightly more complex than {\sc Algorithm A},
mostly because the definition of an alternating path saving a singleton or a $1$-path is different, and slightly more complex.

In the first step of {\sc Algorithm B}, we apply any polynomial time algorithm to find a maximum $2$-matching $M$ in $\overline{G}$.
Let ${\cal P}_0$ (${\cal P}_1$, ${\cal P}_2$, ${\cal P}_3$, ${\cal P}_4$, ${\cal P}_{\ge 5}$, ${\cal C}$, respectively) denote the sub-collection of $0$-paths
($1$-paths, $2$-paths, $3$-paths, $4$-paths, paths of length at least $5$, cycles, respectively) in $M$.
We also let ${\cal P}_{0,1} = {\cal P}_0 \cup {\cal P}_1$ denote the collection of all $0$-paths (called {\em singletons}) and $1$-paths in $M$.

Let $e_0 = (v_0, u_0)$ be an edge in $M$.
In the sequel when we say $e_0$ is {\em adjacent} to a vertex $v_1$ in the graph $\overline{G}$,
we mean $v_1$ is a different vertex (from $v_0$ and $u_0$) and at least one of $v_0$ and $u_0$ is adjacent to $v_1$;
if both $v_0$ and $u_0$ are adjacent to $v_1$, then pick one (often arbitrarily) for the subsequent purposes.
This way, we unify our treatment on singletons and $1$-paths, for the reasons to be seen in the following.
For ease of presentation, we use an {\em object} to refer to a vertex or an edge.
Like in the last subsection, an ending vertex of a $k$-path with $k \ge 1$ or an ending edge of a $k$-path with $k \ge 2$ is called an {\em end-object} for simplicity.

Let $v_0$ be a singleton or $e_0 = (v_0, u_0)$ be a $1$-path in $M$.
In the underlying graph $\overline{G}$,
if $v_0$ is adjacent to a vertex $v_1$ on a cycle of ${\cal C}$, or on a path of ${\cal P}_{\ge 5}$, or on a $4$-path such that $v_1$ is not the middle vertex,
then we may delete a certain edge incident at $v_1$ from $M$ while add the edge $(v_0, v_1)$ to $M$ to achieve another maximum $2$-matching
with one less singleton if $v_0$ is a singleton or with one less $1$-path.
In either of the three cases, assume the edge deleted from $M$ is $(v_1, v_2)$;
then we say the {\em alternating} path $v_0$-$v_1$-$v_2$ {\em saves} the singleton $v_0$ or the $1$-path $e_0 = (v_0, u_0)$.

\begin{figure}[h]
\centering
  \setlength{\unitlength}{0.7bp}%
  \begin{picture}(363.53, 183.82)(0,0)
  \put(0,0){\includegraphics[scale=0.7]{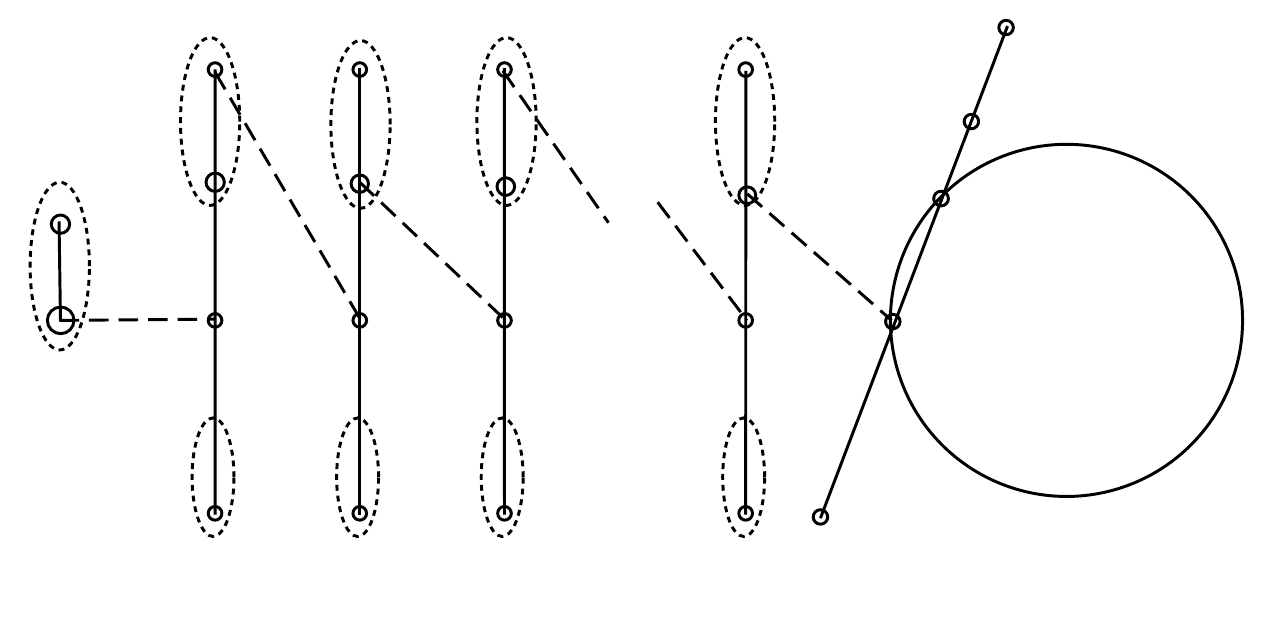}}
  \put(5.67,72.63){\fontsize{14.23}{17.07}\selectfont $v_0$}
  \put(61.24,72.63){\fontsize{14.23}{17.07}\selectfont $v_1$}
  \put(62.62,114.35){\fontsize{14.23}{17.07}\selectfont $v_2$}
  \put(102.92,72.63){\fontsize{14.23}{17.07}\selectfont $v_3$}
  \put(144.60,72.63){\fontsize{14.23}{17.07}\selectfont $v_5$}
  \put(214.06,72.63){\fontsize{14.23}{17.07}\selectfont $v_{2i-1}$}
  \put(262.69,86.52){\fontsize{14.23}{17.07}\selectfont $v_{2i+1}$}
  \put(104.30,114.35){\fontsize{14.23}{17.07}\selectfont $v_4$}
  \put(145.98,114.35){\fontsize{14.23}{17.07}\selectfont $v_6$}
  \put(215.45,114.35){\fontsize{14.23}{17.07}\selectfont $v_{2i}$}
  \put(269.64,114.31){\fontsize{14.23}{17.07}\selectfont $v_{2i+2}$}
  \put(54.29,8.73){\fontsize{14.23}{17.07}\selectfont $P_1$}
  \put(95.97,8.73){\fontsize{14.23}{17.07}\selectfont $P_2$}
  \put(137.65,8.73){\fontsize{14.23}{17.07}\selectfont $P_3$}
  \put(207.12,8.73){\fontsize{14.23}{17.07}\selectfont $P_i$}
  \put(172.39,8.73){\fontsize{14.23}{17.07}\selectfont $\ldots$}
  \put(165.44,93.47){\fontsize{14.23}{17.07}\selectfont $\ldots$}
  \put(172.39,114.35){\fontsize{14.23}{17.07}\selectfont $\ldots$}
  \end{picture}%
\caption{An alternating path $v_0$-$v_1$-$v_2$-$\ldots$-$v_{2i}$-$v_{2i+1}$-$v_{2i+2}$ that saves the singleton $v_0$ or the $1$-path $e_0 = (v_0, u_0)$,
	where the last two vertices are on a cycle of ${\cal C}$, or on a path of ${\cal P}_{\ge 5}$, or on a $4$-path such that $v_{2i+1}$ is not the middle vertex.
	In the figure, solid edges are in the maximum $2$-matching $M$, dashed edges are outside of $M$,
	and a dotted circle contains an object which is either a vertex or an edge.\label{fig03}}
\end{figure}

Analogously as in the last subsection, in the general setting, in the underlying graph $\overline{G}$,
$v_0$ is adjacent to a vertex $v_1$ of a path $P_1 \in {\cal P}_{2,3,4}$ (if $P_1$ is a $4$-path then $v_1$ has to be the middle vertex).
Note that this vertex $v_1$ basically separates the two end-objects of the path $P_1$ --- 
an analogue to the role of the middle vertex of a $2$-path that separates the two endpoints of the $2$-path.
We say ``an end-object of $P_1$ is adjacent to $v_1$ via $v_2$'', to mean that if the end-object is a vertex then it is $v_2$,
or if the end-object is an edge, then it is $(v_2, u_2)$, with the edge $(v_1, v_2)$ on the path $P_1$ either way (see an illustration in Figure~\ref{fig03}).

Suppose one end-object of $P_1$, which is adjacent to $v_1$ via $v_2$, is adjacent to a vertex $v_3$ of another $P_2 \in {\cal P}_{2,3,4}$
(the same, if $P_2$ is a $4$-path then $v_3$ has to be the middle vertex);
one end-object of $P_2$, which is adjacent to $v_3$ via $v_4$, is adjacent to a vertex $v_5$ of another $P_3 \in {\cal P}_{2,3,4}$
(the same, if $P_3$ is a $4$-path then $v_5$ has to be the middle vertex);
and so on;
one end-object of $P_{i-1}$, which is adjacent to $v_{2i-3}$ via $v_{2i-2}$, is adjacent to a vertex $v_{2i-1}$ of another $P_i \in {\cal P}_{2,3,4}$
(the same, if $P_i$ is a $4$-path then $v_{2i-1}$ has to be the middle vertex);
one end-object of $P_i$, which is adjacent to $v_{2i-1}$ via $v_{2i}$, is adjacent to a vertex $v_{2i+1}$ of a cycle of ${\cal C}$, or of a path of ${\cal P}_{\ge 5}$,
or of a $4$-path such that $v_{2i+1}$ is not the middle vertex (see an illustration in Figure~\ref{fig03}),
on which a certain edge $(v_{2i+1}, v_{2i+2})$ is to be deleted.
Then we may delete the edges $\{(v_{2j+1}, v_{2j+2}) \mid j = 0, 1, \ldots, i\}$ from $M$ while add the edges $\{(v_{2j}, v_{2j+1}) \mid j = 0, 1, \ldots, i\}$ to $M$
to achieve another maximum $2$-matching with one less singleton if $v_0$ is a singleton or with one less $1$-path.
We say the {\em alternating} path $v_0$-$v_1$-$v_2$-$\ldots$-$v_{2i}$-$v_{2i+1}$-$v_{2i+2}$ {\em saves} the singleton $v_0$ or the $1$-path $e_0 = (v_0, u_0)$.
It is important to note that in this alternating path, the vertex $v_2$ ``{\em represents}'' the end-object of $P_1$,
meaning that when the end-object is an edge, it is treated very the same as the vertex $v_2$.

\begin{lemma}
\label{lemma24}
Given a maximum $2$-matching $M$ and an object in ${\cal P}_{0,1}$,
finding a simple alternating path to save the object, if exists, can be done in $O(m)$ time, where $m = |\overline{E}|$.
\end{lemma}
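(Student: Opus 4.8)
The plan is to mirror the proof of \cref{lemma22}, generalizing the digraph construction to account for the more complex notion of an alternating path that saves an object in ${\cal P}_{0,1}$. First I would reduce to searching only for \emph{simple} alternating paths, exactly as in \cref{lemma22}: if an alternating path revisits a vertex, the resulting cycle has even length and is itself alternating, so it can be excised to yield a shorter alternating path saving the same object; repeating this terminates in a simple alternating path. This justifies restricting the search space.

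Next I would encode the search as a BFS in an auxiliary digraph $D$ built from $\overline{G}$ and $M$. The subtlety relative to \cref{lemma22} is twofold. First, a source object may be either a singleton $v_0$ or a $1$-path $e_0 = (v_0, u_0)$; to unify these I would use the convention already introduced in the text, gluing the two endpoints of a $1$-path so that adjacency of $e_0$ to $v_1$ means at least one of $v_0, u_0$ is adjacent to $v_1$. Second, the intermediate paths $P_j$ now range over ${\cal P}_{2,3,4}$ rather than just $2$-paths, and each such $P_j$ is entered at a separating vertex $v_{2j-1}$ (the middle vertex when $P_j$ is a $4$-path) and exited through an end-object represented by $v_{2j}$, where that end-object may itself be an edge. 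Accordingly I would orient four kinds of edges in $D$: edges leaving the source object; the internal edges of each $P_j \in {\cal P}_{2,3,4}$ oriented from the separating vertex toward each of its end-objects (so that the representing vertex $v_{2j}$ is reached); edges from an end-object of one path to the separating vertex of another path in ${\cal P}_{2,3,4}$; and edges from an end-object of a path to a legal \emph{target} vertex, namely a vertex on a cycle of ${\cal C}$, on a path of ${\cal P}_{\ge 5}$, or a non-middle vertex of a $4$-path. I would then argue that a BFS from the source terminates successfully exactly when it reaches such a target vertex $v_{2i+1}$, from which a valid deletion edge $(v_{2i+1}, v_{2i+2})$ exists by the same degree/length considerations discussed before the lemma; conversely, if BFS reaches no target, no saving alternating path exists.

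The main obstacle I anticipate is the bookkeeping needed so that the auxiliary digraph faithfully captures the ``end-object represented by $v_{2j}$'' mechanism, particularly when an end-object is an \emph{edge}: I must ensure that arriving at the representing vertex $v_{2j}$ correctly models arrival at the whole end-object, and that the choice of which endpoint of an edge end-object plays the role of $v_2$ versus $u_2$ does not block an otherwise-existing path. Verifying that the chosen orientations neither create spurious alternating paths nor omit genuine ones (the correctness of the reduction in both directions) is where most of the care lies; the $4$-path case, where only the middle vertex may serve as a separating vertex while non-middle vertices may serve as targets, needs separate handling in the construction.

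Finally I would bound the running time. Both the construction of $D$ and the BFS traversal visit each vertex and each edge of $\overline{G}$ a constant number of times, so each takes $O(m)$ time, where $m = |\overline{E}|$. Since classifying the components of $M$ into ${\cal P}_0, {\cal P}_1, \ldots, {\cal P}_{\ge 5}, {\cal C}$ and identifying separating vertices, end-objects, and targets is a linear scan, the total time to find a simple alternating path saving a given object, or to certify that none exists, is $O(m)$, as claimed.
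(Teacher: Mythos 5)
Your proposal follows essentially the same route as the paper's proof: reduce to simple alternating paths by excising even-length alternating cycles, build an auxiliary digraph with the same four classes of oriented edges (edges leaving the source object; edges of paths in ${\cal P}_{2,3,4}$ oriented toward the end-objects, with internal edges effectively bidirected; transfer edges from end-objects into separating vertices of other such paths; and edges from end-objects into targets on ${\cal C}$, ${\cal P}_{\ge 5}$, or non-middle vertices of $4$-paths), then run a BFS whose success is equivalent to the existence of a saving alternating path, all in $O(m)$ time. The bookkeeping you flag as the main obstacle (edge end-objects and the choice of representing endpoint) is resolved in the paper exactly as you anticipate, by treating an end-object edge the same as its representing vertex, so there is no substantive difference between the two arguments.
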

\begin{proof}
The lemma is a generalization of Lemma~\ref{lemma22}.

Firstly, if an alternating path is not simple, then a cycle that forms a subpath is also alternating and has an even length,
and thus the cycle can be removed resulting a shorter alternating path.
Repeating this process if necessary, at the end we achieve a simple alternating path.
Therefore, we can limit the search for a simple alternating path.

Note that the edges on all possible alternating paths that save an object in ${\cal P}_{0,1}$ can be of the following four kinds:
1) all those edges incident at a vertex of the object, each oriented out of the vertex;
2) all those edges of the paths of ${\cal P}_{2,3,4}$, each oriented towards an endpoint and each internal edge is bidirected;
3) all those edges each connecting a vertex of an end-object of a path of ${\cal P}_{2,3,4}$ to an internal vertex of another such path,
	oriented from the vertex of the end-object;
	note that if the second path is a $4$-path, then the internal vertex of this $4$-path has to be the middle vertex;
4) all those edges each connecting a vertex of an end-object of a path of ${\cal P}_{2,3,4}$ to a vertex on some path of ${\cal P}_{\ge 5}$,
	or on some cycle of ${\cal C}$, or on some $4$-path such that the vertex is not the middle vertex, oriented out of the vertex of the end-object.

If follows that by a BFS traversal starting from the vertices of an object of ${\cal P}_{0,1}$
in the digraph formed by the above four kinds of oriented edges,
if a vertex on some path of ${\cal P}_{\ge 5}$, or on some cycle of ${\cal C}$, or on some $4$-path such that the vertex is not the middle vertex,
can be reached then we achieve a simple alternating path;
otherwise, we conclude that no alternating path saving the object exists.
Both construction of the digraph and the BFS traversal take $O(m)$ time.
This proves the lemma.
\end{proof}

The second step of the algorithm is to iteratively find a simple alternating path to save an object of ${\cal P}_{0,1}$;
it terminates when no alternating path is found.
The resulting maximum $2$-matching is still denoted as $M$.

In the last step, we break the cycles in $M$ by deleting one edge per cycle to produce a path cover.
A high-level description of {\sc Algorithm B} is similar to the one for {\sc Algorithm A} shown in Figure~\ref{fig02},
replacing a singleton by an object of ${\cal P}_{0,1}$.
We will prove in Theorem~\ref{thm25} that the path cover produced by {\sc Algorithm B} contains the minimum total number of $0$-paths and $1$-paths.

\begin{theorem}
\label{thm25}
{\sc Algorithm B} is an $O(m^2 \log n)$-time algorithm for computing a path cover with the minimum total number of $0$-paths and $1$-paths
in the agreement graph $\overline{G} = (V, \overline{E})$.
\end{theorem}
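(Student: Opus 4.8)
The plan is to follow the minimal-counterexample argument of Theorem~\ref{thm23} almost verbatim, adapting every step from singletons to the richer class of objects in ${\cal P}_{0,1}$. First I would observe that the last step of {\sc Algorithm B} only breaks cycles, and since every cycle of ${\cal C}$ has length at least $3$, deleting one edge from it yields a path of length at least $2$; hence breaking cycles creates no new $0$-path or $1$-path. It therefore suffices to prove that the maximum $2$-matching $M$ obtained at the end of the second step attains the minimum value of $|{\cal P}_{0,1}|$. Throughout I would keep Lemma~\ref{lemma24} in force, which guarantees that termination of the second step means no simple alternating path saves any object of ${\cal P}_{0,1}$.

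Next I would set up the contradiction. Let $M^*$ be an optimal path cover minimizing the number of $\{0,1\}$-paths, with ${\cal P}_{0,1}^*$ its set of such objects, and suppose $|{\cal P}_{0,1}| > |{\cal P}_{0,1}^*|$ on a minimal agreement graph $\overline{G}$. Minimality lets me delete anything on which the two solutions literally agree (a vertex isolated in both, or an edge that is a $1$-path in both), after which a counting argument produces an object $o_0 \in {\cal P}_{0,1}$ --- a singleton $v_0$, or a $1$-path $e_0 = (v_0, u_0)$ --- at least one of whose vertices lies on a path of length at least $2$ in $M^*$. By the edge-maximality of $M$ (the analogue of Lemma~\ref{lemma21}) together with the termination condition, every $\overline{G}$-neighbor of a vertex of $o_0$ must be a \emph{separating} internal vertex of some path in ${\cal P}_{2,3,4}$ --- the middle vertex of a $2$-path, either internal vertex of a $3$-path, or the middle vertex of a $4$-path --- since adjacency to a cycle of ${\cal C}$, to a path of ${\cal P}_{\ge 5}$, or to a non-middle internal vertex of a $4$-path would already furnish a length-$2$ alternating path saving $o_0$.

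The heart of the proof is then an infinite-descent (chain) argument mirroring the one for $2$-paths in Theorem~\ref{thm23}. Because $o_0$ is upgraded in $M^*$, it is joined there to such a separating vertex $v_1$ of some $P_1 \in {\cal P}_{2,3,4}$. I would argue that no end-object of $P_1$ can itself be a $\{0,1\}$-object of $M^*$: otherwise the short alternating path through $v_1$ would save $o_0$ while re-creating a $\{0,1\}$-object there, producing an equally good $2$-matching that shares this object with $M^*$ and contradicting minimality, exactly as in the $v_0$-$v_1$-$v_2$ case of Theorem~\ref{thm23}. Hence in $M^*$ each end-object of $P_1$ is again attached to a separating vertex of a path in ${\cal P}_{2,3,4}$, producing $P_2$, then $P_3$, and so on. The key obstacle, and what makes this case genuinely harder than Theorem~\ref{thm23}, is the bookkeeping of this chain: objects are now vertices \emph{or} edges, with an end-object ``representing'' its outer vertex $v_2$ (cf.\ Figure~\ref{fig03}), so I must fix the correct notion of two objects being ``shared'' across the vertex/edge divide; moreover the three admissible path types carry different numbers of separating vertices versus end-objects (a $2$-path has one separating vertex and two end-objects, a $3$-path two and two, a $4$-path only one usable separating vertex and two end-objects). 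I would therefore run the pigeonhole on end-objects against available separating slots per path type to force the chain to keep recruiting fresh paths of ${\cal P}_{2,3,4}$, which is impossible in the finite graph $\overline{G}$; this contradiction shows $M$ is optimal.

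Finally, the running time is inherited from Theorem~\ref{thm23}. In each iteration of the second step one may ``glue'' all objects of ${\cal P}_{0,1}$ into a single source and run one BFS via Lemma~\ref{lemma24} in $O(m)$ time, then check which object is the root of the returned path and save it; with $O(n)$ iterations this gives $O(nm)$, the last step is $O(n)$, and so the total is dominated by computing the maximum $2$-matching in $O(m^2 \log n)$ time.
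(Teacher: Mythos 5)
Your proposal is correct and takes essentially the same route as the paper's own proof: the same minimal-counterexample setup with disjointness of the $\{0,1\}$-objects of $M$ and $M^*$, the same use of edge-maximality plus termination of Step~2 to force every neighbour of such an object onto a separating vertex of a path in ${\cal P}_{2,3,4}$, the same trade-and-share contradiction ruling out end-objects of $P_1$ being $\{0,1\}$-objects of $M^*$, the same chain argument against finiteness of $\overline{G}$, and the identical running-time accounting. If anything you are slightly more explicit than the paper (noting that cycle-breaking creates no new $\{0,1\}$-paths, and flagging the per-path-type pigeonhole bookkeeping); your preliminary claim that some object of ${\cal P}_{0,1}$ has a vertex on a length-$\ge 2$ path of $M^*$ is unnecessary, since the argument only needs that it is not an object of ${\cal P}_{0,1}^*$.
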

\begin{proof}
The proof is similar to the proof of Theorem~\ref{thm23}.

Recall that the last step of {\sc Algorithm B} is to break cycles only.
We thus use the maximum $2$-matching achieved at the end of the second step in the following proof, denoted as $M$.
We point out that in the second step, in each iteration where an alternating path is found to save an object of ${\cal P}_{0,1}$ of the current maximum $2$-matching,
we swap the edges on the alternating path inside in the matching with the edges outside of the matching
to {\em move} from the current maximum $2$-matching to another maximum $2$-matching
(which contains one less object of ${\cal P}_{0,1}$).

We prove the theorem by the minimal counterexample.

Let $M^*$ be an optimal path cover that contains the minimum total number of $0$-paths and $1$-paths,
and similarly let ${\cal P}_i^*$ denote the sub-collection of the $i$-paths in $M^*$, for $i = 0, 1, 2, \ldots$.
Assume to the contrary that
\begin{equation}
\label{eq3}
|{\cal P}_{0,1}| > |{\cal P}_{0,1}^*| \ge 0,
\end{equation}
and assume our agreement graph $\overline{G} = (V, \overline{E})$ is a minimal graph on which Eq.~(\ref{eq3}) holds,
then $M$ and $M^*$ should not have any common singleton or any common $1$-path as otherwise it can be deleted to obtain a smaller graph.
That is,
\begin{equation}
\label{eq4}
{\cal P}_0 \cap {\cal P}_0^* = \emptyset, \mbox{ and } {\cal P}_1 \cap {\cal P}_1^* = \emptyset.
\end{equation}

It follows that an object of ${\cal P}_{0,1}$ is not an object of ${\cal P}_{0,1}^*$.
In the sequel we assume there is a singleton $v_0 \in {\cal P}_0$ and show that it leads to a contradiction.
A similar contradiction can be constructed if there is a $1$-path in ${\cal P}_1$.
Since $v_0$ is not a singleton in $M^*$, we may suppose $(v_0, v_1) \in M^*$.
From the edge maximality of $M$ and the non-existence of an alternating path to save $v_0$,
we conclude that $v_1$ has to be a vertex of some path $P_1 \in {\cal P}_{2,3}$ or the middle vertex of some $4$-path $P_1$,
such that $v_1$ separates the two end-objects of $P_1$.

If an end-object is a vertex, denoted as $v_2$,
then the same from the edge maximality of $M$ and the non-existence of an alternating path to save $v_0$,
we conclude that $v_2$ behaves the same as $v_0$,
that it can be adjacent to only the vertices of paths in ${\cal P}_{2,3}$ or the middle vertices of $4$-paths, including $v_1$.
On the other hand, $v_2$ cannot be a singleton in $M^*$,
since otherwise the alternating path $v_0$-$v_1$-$v_2$ would save $v_0$ but leave $v_2$ as a new singleton,
which subsequently gives rise to another maximum $2$-matching $M'$ with the same number of singletons but $M'$ shares with $M^*$ a common singleton,
a contradiction to the minimality of $\overline{G}$.

If an end-object is an edge, denoted as $(v_2, u_2)$, adjacent to $v_1$ via $v_2$,
then the same from the edge maximality of $M$ and the non-existence of an alternating path to save $v_0$,
we conclude that both $v_2$ and $u_2$ behave the same as $v_0$,
that each can be adjacent to only the vertices of paths in ${\cal P}_{2,3}$ or the middle vertices of $4$-paths, including $v_1$.
On the other hand, at least one of $v_2$ and $u_2$ should be adjacent to a third vertex in $M^*$.
Assume this is not the case, then $(v_2, u_2)$ has to be a $1$-path in $M^*$ ($v_2$ and $u_2$ cannot both be singletons due to the existence of the edge $(v_2, u_2)$).
It follows that the alternating path $v_0$-$v_1$-$v_2$ would save $v_0$ but leave $(v_2, u_2)$ as a new $1$-path,
which subsequently gives rise to another maximum $2$-matching $M'$ with the same total number of singletons and $1$-paths
but $M'$ shares with $M^*$ a common $1$-path,
a contradiction to the minimality of $\overline{G}$.

Since the two path-edges that $v_1$ is incident to in $P_1$ cannot both be in $M^*$ (otherwise $v_1$ would have degree $3$ in $M^*$),
in $M^*$ one end-object of the path $P_1$ is adjacent to a vertex of some path in ${\cal P}_{2,3}$ or the middle vertex of some $4$-path denoted as $P_2$.
Let $v_2$ denote the vertex through which this end-object of the path $P_1$ connects to $v_1$, and $v_3$ denote the vertex on the path $P_2$.

Similarly as in the proof of Theorem~\ref{thm23}, we may repeat the above argument on the path $P_1$ for the new path $P_2$,
to either contradict the minimality of the graph $\overline{G}$ or introduce another new path $P_3$;
and so on.
The latter cases together contradict the fact that the graph $\overline{G}$ is finite,
and therefore the maximum $2$-matching at the end of the second step of {\sc Algorithm B} contains the minimum total number of singletons and $1$-paths.

For the running time, since in each iteration of the second step again we may ``{\em glue}'' all singletons and the endpoints of all the $1$-paths
as one for finding an alternating path.
If no alternating path is found, then the second step terminates;
otherwise one can easily check which singletons and/or $1$-paths are the root of the alternating path and pick to save one of them, and the iteration ends.
It follows that there could be $O(n)$ iterations and each iteration needs $O(m)$ time, and thus the total running time for the second step is $O(nm)$.
Clearly the last step can be done in $O(n)$ time.
Therefore the running time of {\sc Algorithm B} is dominated by the first step of finding a maximum $2$-matching, which is done in $O(m^2 \log n)$ time.
This finishes the proof of the theorem.
\end{proof}

\begin{remark}
\label{remark26}
The path cover produced by {\sc Algorithm B} has the minimum total number of $0$-paths and $1$-paths in the agreement graph $\overline{G} = (V, \overline{E})$.
One may run {\sc Algorithm A} at the end of the second step of {\sc Algorithm B} to achieve a path cover with the minimum total number of $0$-paths and $1$-paths,
and with the minimum number of $0$-paths.
During the execution of {\sc Algorithm A}, a singleton trades for a $1$-path.
\end{remark}

\subsection{Approximation algorithms for $F2 \mid G = (V, E), p_{ij} = 1 \mid C_{\max}$}
%--------------------------------------------------------------------------------------------------
Given an instance of the problem $F2 \mid G = (V, E), p_{ij} = 1 \mid C_{\max}$,
where there are $n$ unit jobs $V = \{J_1, J_2, \ldots, J_n\}$ to be processed on the two-machine flow-shop, with their conflict graph $G = (V, E)$,
we want to find a schedule with a provable makespan.

For a $k$-path in the agreement graph $\overline{G} = (V, \overline{E})$, where $k \ge 0$, for example $P = J_1$-$J_2$-$\ldots$-$J_k$-$J_{k+1}$,
we compose a sub-schedule $\pi_P$ in which the machine $M_1$ continuously processes the jobs $J_1, J_2, \ldots, J_{k+1}$ in order,
and the machine $M_2$ in one unit of time after $M_1$ continuously processes these jobs in the same order.
The sub-makespan for the flow-shop to complete these $k+1$ jobs is thus $k+2$ (units of time).
Let $M = \{P_1, P_2, \ldots, P_\ell\}$ be a path cover of size $\ell$ in the agreement graph $\overline{G}$.
For each path $P_i$ we use $|P_i|$ to denote its length and construct the sub-schedule $\pi_{P_i}$ as above that has a sub-makespan of $|P_i| + 2$.
We then concatenating these $\ell$ sub-schedules (in an arbitrary order) into a full schedule $\pi$, which clearly has a makespan
\begin{equation}
\label{eq5}
C_{\max}^{\pi} = \sum_{i=1}^\ell (|P_i| + 2) = n + \ell.
\end{equation}

On the other hand, given a schedule $\pi$, if two jobs $J_{j_1}$ and $J_{j_2}$ are processed concurrently on the two machines,
then they have to be agreeing to each other and thus adjacent in the agreement graph $\overline{G}$;
we select this edge $(J_{j_1}, J_{j_2})$.
Note that one job can be processed concurrently with at most two other jobs as there are only two machines.
Therefore, all the selected edges form into a number of vertex-disjoint paths in $\overline{G}$ (due to the flow-shop, no cycle is formed);
these paths together with the vertices outside of the paths, which are the $0$-paths, form a path cover for $\overline{G}$.
Assuming without loss of generality that two machines cannot both idle at any time point, the makespan of the schedule is exactly calculated as in Eq.~(\ref{eq5}).

We state this relationship between a feasible schedule and a path cover in the agreement graph $\overline{G}$ into the following lemma.

\begin{lemma}{\rm \cite{TB18}}
\label{lemma26}
A feasible schedule $\pi$ for the problem $F2 \mid G = (V, E), p_{ij} = 1 \mid C_{\max}$ one-to-one
corresponds to a path cover $M$ in the agreement graph $\overline{G}$,
and $C_{\max}^{\pi} = n + |M|$, where $n$ is the number of jobs in the instance.
\end{lemma}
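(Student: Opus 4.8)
The plan is to establish the correspondence in both directions and verify that it preserves the objective value, since most of the construction has already been laid out in the two paragraphs preceding the statement. For the direction from a path cover to a schedule, I would take an arbitrary path cover $M = \{P_1, \ldots, P_\ell\}$ of $\overline{G}$, build the sub-schedule $\pi_{P_i}$ for each path exactly as described (the machine $M_1$ processing the jobs of $P_i$ continuously in the path order and $M_2$ trailing one time unit behind), and concatenate them in an arbitrary order. The first thing to check is \emph{feasibility}: two jobs occupy the two machines simultaneously only when they are consecutive on some $P_i$, hence adjacent in $\overline{G}$, hence agreeing, so no conflicting pair is ever processed concurrently. The makespan computation is then the routine count in Eq.~(\ref{eq5}): each $P_i$ with $|P_i|$ edges carries $|P_i|+1$ jobs and contributes a sub-makespan of $|P_i|+2$, and since $\sum_i (|P_i|+1) = n$ we obtain $C_{\max}^{\pi} = \sum_i (|P_i|+2) = n + \ell = n + |M|$.

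For the reverse direction I would start from any feasible schedule $\pi$ and, normalizing so that the two machines are never simultaneously idle, select for every time unit in which two jobs sit on the two machines the corresponding edge of $\overline{G}$ (feasibility of $\pi$ guarantees the two jobs agree, so the edge exists). The two structural claims to prove are that the selected edges have maximum degree two and that they contain no cycle. The degree bound is immediate: a job is processed once on $M_1$ and once on $M_2$, so it can be concurrent with at most one other job during each of these two unit intervals. For the acyclicity, I would exploit the flow-shop ordering: orient each selected edge from the job currently on $M_2$ (which has already cleared $M_1$) to the job currently on $M_1$; this orientation always points from an earlier to a later start time on $M_1$, so it is consistent with a strict temporal order and a directed cycle is impossible. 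Together with the uncovered vertices as $0$-paths, the selected edges thus form a path cover $M$ of $\overline{G}$, and the same bookkeeping as above---now reading off the single unit of delay that each path contributes on $M_2$---gives $C_{\max}^{\pi} = n + |M|$.

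To finish, I would argue that the two maps are mutually inverse up to the natural normalization, so that the correspondence is one-to-one: the schedule produced from a path cover selects back exactly the edges of that cover, and the path cover extracted from a normalized schedule rebuilds exactly that schedule's concurrency pattern. I expect the main obstacle to be the acyclicity of the selected edge set in the reverse direction, since this is the one place where the flow-shop semantics (each job traversing $M_1$ before $M_2$, nonpreemptively) is genuinely used rather than mere adjacency; once the temporal orientation is fixed, the remaining degree and makespan arguments reduce to elementary counting. As the result is attributed to Tellache and Boudhar~\cite{TB18}, one could alternatively just cite their proof, but spelling out the two constructions is what makes the identity $C_{\max}^{\pi} = n + |M|$ transparent for the approximation analysis that follows.
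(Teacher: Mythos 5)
Your proposal is correct and follows essentially the same route as the paper, whose ``proof'' of this lemma is exactly the two paragraphs preceding its statement: composing and concatenating the sub-schedules $\pi_{P_i}$ to get $C_{\max}^{\pi} = n + |M|$ in one direction, and selecting the edges of concurrently processed job pairs (degree at most two, no cycles because of the flow-shop order, uncovered vertices as $0$-paths) in the other. Your temporal-orientation argument for acyclicity merely makes explicit what the paper asserts in passing (``due to the flow-shop, no cycle is formed''), so there is no substantive difference in approach.
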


\begin{theorem}
\label{thm27}
The problem $F2 \mid G = (V, E), p_{ij} = 1 \mid C_{\max}$ admits an $O(m^2 \log n)$-time $4/3$-approximation algorithm,
where $n = |V|$ and $m = |\overline{E}|$.
\end{theorem}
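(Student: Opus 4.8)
The plan is to reduce the makespan to the path-cover size via Lemma~\ref{lemma26}, run {\sc Algorithm B} to obtain a path cover with few short paths, and then bound the makespan of the resulting schedule against $\OPT$ through a simple vertex-counting argument. Note that we cannot hope to minimize the path-cover size directly, since that is exactly the NP-hard {\sc Path Cover} problem; the idea is that minimizing short paths is both tractable (by {\sc Algorithm B}) and sufficient to control the makespan.

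First I would invoke Lemma~\ref{lemma26}: every feasible schedule $\pi$ corresponds to a path cover $M$ of $\overline{G}$ with $C_{\max}^{\pi} = n + |M|$, so $\OPT = n + k^*$ where $k^*$ is the minimum path-cover size. I would then apply {\sc Algorithm B} (Theorem~\ref{thm25}) to compute, in $O(m^2\log n)$ time, a path cover minimizing the total number of $0$-paths and $1$-paths, and convert it into a schedule $\SOL$ as in Eq.~(\ref{eq5}). Writing $a$, $b$, $c$ for the numbers of $0$-paths, $1$-paths, and paths of length at least $2$ in this cover, we have $\SOL = n + (a+b+c)$.

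Next I would establish two inequalities. For the upper bound, a $0$-path, a $1$-path, and a longer path occupy $1$, $2$, and at least $3$ vertices respectively, so $n \ge a + 2b + 3c$; substituting $c = (a+b+c) - a - b$ yields $a+b+c \le (n + 2a + b)/3$ and hence $\SOL \le (4n + 2a + b)/3$. For the lower bound, the crucial observation is that {\sc Algorithm B} minimizes the short-path count over \emph{all} path covers, so if $\OPT$ has $a^*$ singletons and $b^*$ one-paths then $a + b \le a^* + b^*$; since $\OPT$ also contains $c^* \ge 0$ longer paths, $k^* = a^* + b^* + c^* \ge a + b$, whence $\OPT \ge n + a + b$.

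Finally I would combine the two bounds: $\SOL \le (4n + 2a + b)/3 \le (4n + 4a + 4b)/3 = \tfrac{4}{3}(n + a + b) \le \tfrac{4}{3}\OPT$, the middle step being just $2a + 3b \ge 0$. The running time is dominated by {\sc Algorithm B}, giving $O(m^2\log n)$. The main obstacle is the lower bound: it is not obvious a priori why minimizing short paths should control $k^*$, and the key insight is precisely that the short-path count of any cover---including the optimal one---is at least the minimum value achieved by {\sc Algorithm B}, which is exactly enough to absorb the $2a+b$ slack in the upper bound. Applying Remark~\ref{remark26} to additionally minimize the number of $0$-paths only tightens the $2a$ term and is not needed for the $4/3$ ratio.
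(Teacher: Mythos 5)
Your proposal is correct and follows essentially the same route as the paper's own proof: both invoke Lemma~\ref{lemma26}, run {\sc Algorithm B} to minimize the number of $0$- and $1$-paths, lower-bound $C_{\max}^*$ by $n$ plus that minimum (since the optimal schedule's path cover has at least as many short paths), and upper-bound the solution via the fact that every path of length at least $2$ covers at least three vertices. Your inequality $\SOL \le (4n+2a+b)/3 \le \tfrac{4}{3}(n+a+b)$ is just an explicit rewriting of the paper's chain $C_{\max} = n + |M| \le n + |{\cal P}_{0,1}| + \tfrac{1}{3}\left(n - |{\cal P}_0| - 2|{\cal P}_1|\right) \le \tfrac{4}{3} C_{\max}^*$.
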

\begin{proof}
Let $\pi^*$ denote an optimal schedule for the problem $F2 \mid G = (V, E), p_{ij} = 1 \mid C_{\max}$ with a makespan $C_{\max}^*$,
and $M^*$ be the corresponding path cover in the agreement graph $\overline{G}$.
The sub-collection of $0$-paths and $1$-paths in $M^*$ is denoted as ${\cal P}^*_{0,1}$.

Let $M$ be the path cover computed by {\sc Algorithm B} for the agreement graph $\overline{G}$ that achieves the minimum total number of $0$-paths and $1$-paths.
The sub-collections of $0$-paths and $1$-paths in $M$ are denoted as ${\cal P}_0$ and ${\cal P}_1$, respectively, and ${\cal P}_{0,1}$ denotes their union.
Then we have
\[
|{\cal P}_{0,1}^*| \ge |{\cal P}_{0,1}|.
\]

From Lemma~\ref{lemma26}, we have
\[
C_{\max}^* = n + |M^*| \ge n + |{\cal P}_{0,1}^*| \ge n + |{\cal P}_{0,1}|.
\]
It follows also from Lemma~\ref{lemma26} that the schedule constructed using the path cover $M$ has a makespan
\[
C_{\max} = n + |M| \le n + |{\cal P}_{0,1}| + \frac 13 \left(n - |{\cal P}_0| - 2 |{\cal P}_1| \right) \le \frac 43 C_{\max}^*,
\]
since every path of length $2$ or above contains at least three vertices.

Note that the running time of the approximation algorithm is obvious,
which calls {\sc Algorithm B} and then constructs the schedule in $O(n)$ time using the computed path cover.
\end{proof}

\begin{remark}
\label{remark28}
If {\sc Algorithm A} is used in the proof of Theorem~\ref{thm27} to compute a path cover with the minimum number of $0$-paths and
subsequently to construct a schedule $\pi$,
then we have $C_{\max}^{\pi} \le \frac 32 C_{\max}^*$.
That is, we have an $O(m^2 \log n)$-time $3/2$-approximation algorithm based on {\sc Algorithm A}.
\end{remark}

When the agreement graph $\overline{G}$ consists of $k$ vertex-disjoint triangles such that
a vertex of the $i$-th triangle is adjacent to a vertex of the $(i+1)$-st triangle, for $i = 1, 2, \ldots, k-1$, and the maximum degree is $3$,
{\sc Algorithm B} could produce a path cover containing $k$ $2$-paths, while there is a Hamiltonian path in the graph.
This suggests that the approximation ratio $4/3$ is asymptotically tight.

\section{Approximating $F2 \mid G = K_\ell \cup K_{n-\ell}, p_{ij} \mid C_{\max}$}
%==================================================================================================
In this section, we present a $3/2$-approximation algorithm for the weakly NP-hard problem $F2 \mid G = K_\ell \cup K_{n-\ell}, p_{ij} \mid C_{\max}$
for arbitrary jobs with a conflict graph that is the union of two disjoint cliques.
Therefore, the agreement graph $\overline{G} = K_{\ell, n-\ell}$ is a complete bipartite graph.
Without loss of generality, let the job set of $K_\ell$ be $A = \{J_1, J_2, \ldots, J_\ell\}$ and
the job set of $K_{n-\ell}$ be $B = \{J_{\ell+1}, J_{\ell+2}, \ldots, J_n\}$.

%We use a simple algorithm to warm up.
%
%
%\subsection{A $3/2$-approximation}
%--------------------------------------------------------------------------------------------------
For the job set $A$, we merge all its jobs (in the sequential order with increasing indices) to become a single ``{\em aggregated}'' job denoted as $J_A$,
with its processing time on the machine $M_1$ being $P_A^1 = \sum_{j=1}^\ell p_{1j}$ and
its processing time on the machine $M_2$ being $P_A^2 = \sum_{j=1}^\ell p_{2j}$.
Likewise, for the job set $B$, we merge all its jobs (in the sequential order with increasing indices) to become a single aggregated job denoted as $J_B$,
with its two processing times being $P_B^1 = \sum_{j=\ell+1}^n p_{1j}$ and $P_B^2 = \sum_{j=\ell+1}^n p_{2j}$.
We now have an instance of the classical two-machine flow-shop scheduling problem consisting of only two aggregated jobs $J_A$ and $J_B$,
and we may apply Johnson's algorithm~\cite{Joh54} to obtain a schedule denoted as $\pi$.
From $\pi$ we obtain a schedule for the original instance of the problem $F2 \mid G = K_\ell \cup K_{n-\ell}, p_{ij} \mid C_{\max}$,
which is also denoted as $\pi$ as there is no major difference.
We call this algorithm as {\sc Algorithm C}.

\begin{theorem}
\label{thm31}
{\sc Algorithm C} is an $O(m)$-time $3/2$-approximation algorithm for the problem $F2 \mid G = K_\ell \cup K_{n-\ell}, p_{ij} \mid C_{\max}$,
where $m$ is the number of edges in the conflict graph $G$.
\end{theorem}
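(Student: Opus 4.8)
The plan is to pin Algorithm~C's makespan between four lower bounds on the optimum. First I would record that Algorithm~C outputs a genuinely feasible schedule for the original instance: since $A$ and $B$ are each processed as one contiguous block, within a block the $M_1$-work and the $M_2$-work of that block occupy disjoint time intervals, so no two mutually conflicting jobs (all conflicts lie inside $A$ or inside $B$) are ever run concurrently, while an $A$-job and a $B$-job agree and may overlap freely across the two blocks. Hence the makespan of the unfolded schedule equals that of the aggregated two-job flow shop, which Johnson's rule solves optimally, giving
\[
C_{\max}^{\pi} = \min\!\big(P_A^1 + \max(P_A^2, P_B^1) + P_B^2,\; P_B^1 + \max(P_B^2, P_A^1) + P_A^2\big) = P - \mu,
\]
where $P = P_A^1+P_A^2+P_B^1+P_B^2$ and $\mu = \max\!\big(\min(P_A^2,P_B^1),\,\min(P_A^1,P_B^2)\big)$.

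Next I would assemble the lower bounds on $C_{\max}^*$. Two are the standard machine-load bounds $C_{\max}^* \ge P_A^1 + P_B^1$ and $C_{\max}^* \ge P_A^2 + P_B^2$, since all $M_1$-processing (resp. all $M_2$-processing) must fit inside $[0, C_{\max}^*]$. The two decisive bounds come from the conflict graph: because every pair inside $A$ conflicts, in any feasible schedule the set of times when $M_1$ runs an $A$-job and the set of times when $M_2$ runs an $A$-job are disjoint (a simultaneous instant would place two distinct, mutually conflicting $A$-jobs on different machines), whence $C_{\max}^* \ge P_A^1 + P_A^2$; symmetrically $C_{\max}^* \ge P_B^1 + P_B^2$. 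Writing $L$ for the maximum of these four quantities, we have $C_{\max}^* \ge L$.

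The remaining work is to show $C_{\max}^{\pi} = P - \mu \le \tfrac32 L$. Using the $A \leftrightarrow B$ symmetry (which swaps the two inner minima and leaves $P$, $\mu$, and $L$ invariant) I would assume $\mu = \min(P_A^2, P_B^1)$, then split on which of $P_A^2, P_B^1$ is smaller and, inside each branch, on which of $P_A^1, P_B^2$ realizes the other minimum. In every one of the four subcases $P - \mu$ decomposes as one of the four lower-bound quantities plus a single leftover term, and the defining inequality $\mu \ge \min(P_A^1,P_B^2)$ forces that leftover to be at most $\tfrac12 L$; e.g. when $P_A^2 \ge P_B^1$ and $\min(P_A^1,P_B^2)=P_B^2$, the inequality reads $P_B^1 \ge P_B^2$, so $P_B^2 \le \tfrac12(P_B^1+P_B^2) \le \tfrac12 L$ while $C_{\max}^{\pi} = (P_A^1+P_A^2)+P_B^2$, giving $C_{\max}^{\pi} \le L + \tfrac12 L$. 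The other three subcases are identical up to relabelling, so $C_{\max}^{\pi} \le \tfrac32 L \le \tfrac32 C_{\max}^*$.

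Finally, for the running time the dominant cost is reading the conflict graph to identify the cliques $A,B$ and forming the four aggregate processing times, after which Johnson's rule on two jobs and the unfolding are $O(1)$ and $O(n)$ respectively; this is $O(m)$ overall. I expect the main obstacle to be not any single computation but the correct justification of the two conflict-derived bounds $C_{\max}^* \ge P_A^1 + P_A^2$ and $C_{\max}^* \ge P_B^1 + P_B^2$, since it is precisely these (rather than the machine loads) on which the $3/2$ analysis leans; once they are in hand the case analysis is routine.
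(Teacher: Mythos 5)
Your proof is correct and takes essentially the same route as the paper's: the same four lower bounds (the two machine loads $P_A^1+P_B^1$, $P_A^2+P_B^2$ and the two clique-conflict bounds $P_A^1+P_A^2$, $P_B^1+P_B^2$), combined with a case analysis of Johnson's two-job makespan in which each case decomposes the makespan as one lower-bound quantity plus a leftover term at most half a lower bound. The differences are only organizational---you package Johnson's makespan in the closed form $P-\mu$ and reduce cases via $A\leftrightarrow B$ symmetry, where the paper assumes $P_A^1 \le P_B^1$ and splits into three cases; you also spell out the feasibility of the unfolded schedule and the conflict-based justification of $C_{\max}^* \ge P_A^1+P_A^2$ and $C_{\max}^* \ge P_B^1+P_B^2$, which the paper states only implicitly.
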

\begin{proof}
Firstly, we note that {\sc Algorithm C} needs to spend $O(m)$ time to recognize that the conflict graph is indeed the union of two disjoint cliques,
and subsequently composes the two aggregated jobs.
If the two job subsets $A$ and $B$ are given without the need of recognition, then composing the two aggregated jobs can be done in $O(n)$ time,
where $n$ is the number of given jobs.
Scheduling two jobs on the two-machine flow-shop is done in constant time, afterwards the schedule $\pi$ for the original $n$ jobs can be constructed in $O(n)$ time.

Let $C_{\max}^*$ and $C_{\max}^\pi$ denote the optimal makespan and the makespan of the schedule $\pi$ produced by {\sc Algorithm C}, respectively.
One clearly sees that
\begin{equation}
\label{eq6}
C_{\max}^* \ge \max\{P_A^1 + P_A^2, P_A^1 + P_B^1, P_B^1 + P_B^2, P_A^2 + P_B^2\},
\end{equation}
in which each sum represents the total processing time of jobs in $A$,
the total processing time of jobs on the machine $M_1$,
the total processing time of jobs in $B$,
and the total processing time of jobs on the machine $M_2$, respectively.

Assume without loss of generality that $P_A^1 \le P_B^1$.

If $P_A^1 \le P_A^2$, then $C_{\max}^{\pi} = P_A^1 + \max\{P_A^2, P_B^1\} + P_B^2 \le P_A^1 + C_{\max}^* \le \frac 32 C_{\max}^*$;

if $P_A^1 > P_A^2 > P_B^2$, then $C_{\max}^{\pi} = P_A^1 + \max\{P_A^2, P_B^1\} + P_B^2 \le C_{\max}^* + P_B^2 \le \frac 32 C_{\max}^*$;

if $P_A^1 > P_A^2$ and $P_A^2 \le P_B^2$, then $C_{\max}^{\pi} = P_B^1 + \max\{P_B^2, P_A^1\} + P_A^2 \le C_{\max}^* + P_A^2 \le \frac 32 C_{\max}^*$.

This proves the theorem.
\end{proof}

%\subsection{A $11/8$-approximation}
%--------------------------------------------------------------------------------------------------
In the schedule produced by {\sc Algorithm C}, one sees that when the jobs of $A$ are processed on the machine $M_1$,
the other machine $M_2$ is left idle.
This is certainly disadvantageous.
For instance, when the jobs are all unit jobs and $|A| = |B| = \frac 12 n$, the makespan of the produced schedule is $\frac 32 n$,
while the agreement graph is Hamiltonian and thus by Eq.~(\ref{eq5}) the optimal makespan is only $n + 1$.
This huge gap suggests that one could probably design a better approximation and we leave it as an open question.
%Indeed, in the following we present an improved approximation algorithm that
%determines a subset of $B$ to be processed on $M_2$ while some jobs of $A$ are processed on $M_1$.
%
%
%\begin{theorem}
%\label{thm32}
%The problem $F2 \mid G = K_\ell \cup K_{n-\ell}, p_{ij} \mid C_{\max}$ admits an $O(m)$-time $11/8$-approximation algorithm,
%where $m$ is the number of edges in the conflict graph.
%\end{theorem}
%%
%\begin{proof}
%We construct in the following a schedule $\pi$ in $O(m)$ time and show that its makespan $C_{\max}^{\pi} \le \frac {11}8 C_{\max}^*$.
%
%We assume the two job subsets $A$ and $B$ are given at the front, as otherwise one can spend $O(m)$ time to determine them;
%and $P_A^1, P_A^2, P_B^1, P_B^2$ are calculated as in the last subsection.
%Our discuss is for the following case specified in Eq.~(\ref{eq7}), while all the other cases can be symmetrically discussed.
%%
%\begin{equation}
%\label{eq7}
%P_A^1 \le P_A^2, \ P_B^1 \le P_B^2, \mbox{ and } P_A^1 \le P_B^1.
%\end{equation}
%%
%The makespan of the schedule produced by {\sc Algorithm C} is $P_A^1 + \max\{P_A^2, P_B^1\} + P_B^2$.
%\end{proof}

\section{Concluding remarks}
%==================================================================================================
In this paper, we investigated the approximation algorithms for the two-machine flow-shop scheduling problem with a conflict graph,
in particular two special cases of all unit jobs and of a conflict graph that is the union of two disjoint cliques,
that is, $F2 \mid G = (V, E), p_{ij} = 1 \mid C_{\max}$ and $F2 \mid G = K_\ell \cup K_{n-\ell}, p_{ij} \mid C_{\max}$.
For the first problem we studied the graph theoretical problem of finding a path cover with the minimum total number of $0$-paths and $1$-paths,
and presented a polynomial time exact algorithm.
This exact algorithm leads to a $4/3$-approximation algorithm for the problem $F2 \mid G = (V, E), p_{ij} = 1 \mid C_{\max}$.
We also showed that the performance ratio $4/3$ is asymptotically tight.
For the second problem $F2 \mid G = K_\ell \cup K_{n-\ell}, p_{ij} \mid C_{\max}$, we presented a $3/2$-approximation algorithm.

We conjecture that designing approximation algorithms for $F2 \mid G = (V, E), p_{ij} = 1 \mid C_{\max}$ with a performance ratio better than $4/3$ is challenging,
since one way or the other one has to deal with longer paths in a path cover or has to deal with the original {\sc Path Cover} problem.
Nevertheless, better approximation algorithms for $F2 \mid G = K_\ell \cup K_{n-\ell}, p_{ij} \mid C_{\max}$ can be expected.

\subparagraph*{Acknowledgements.}
%-------------------------------------------------------------------------------
G. Chen, Y. Chen and A. Zhang are supported by the NSFC Grants 11571252 and 11771114;
Y. Chen is also supported by the China Scholarship Council Grant No. 201508330054;
R. Goebel and G. Lin are supported by NSERC Canada;
G. Lin is also supported by the NSFC Grant No. 61672323;
L. Liu is supported by the Fundamental Research Funds for the Central Universities (Grant No. 20720160035) and by the China Scholarship Council Grant No. 201706315073.

%\appendix
%\section{Morbi eros magna}
%==================================================================================================

%%
%% Bibliography
%%

%% Either use bibtex (recommended), 

%\bibliography{mybibfile,../BiBTeX/scheduling,../BiBTeX/mypapers,../BiBTeX/general}

%% .. or use the thebibliography environment explicitly

\end{document}